\newtheorem{definition}{Definition}
\newtheorem{theorem}{Theorem}
\begin{document}


\title{Secure and Privacy-Friendly Local Electricity Trading and Billing in Smart Grid}

\author{Aysajan~Abidin,~Abdelrahaman~Aly,~Sara~Cleemput,~and~Mustafa~A.~Mustafa
  \thanks{This work was supported in part by the Research Council KU Leuven: C16/15/058. In addition, this work was supported by the European Commission through KU Leuven BOF OT/13/070, H2020-DS-2014-653497 PANORAMIX and H2020-ICT-2014-644371 WITDOM.}           
 \thanks{A. Abidin, A. Aly, S. Cleemput and M.A. Mustafa are with the imec-COSIC research group, Dept. of Electrical Engineering (ESAT), KU Leuven, Belgium. e-mail: (\{aysajan.abidin, abdelrahaman.aly, sara.cleemput, mustafa.mustafa\}@esat.kuleuven.be).}
}
\date{\vspace{-5ex}}
\maketitle


\begin{abstract}
This paper proposes two decentralised, secure and privacy-friendly protocols for local electricity trading and billing, respectively. The trading protocol employs a bidding algorithm based upon secure multiparty computations and allows users to trade their excess electricity among themselves. The bid selection and calculation of the trading price are performed in a decentralised and oblivious manner. The billing protocol is based on a simple privacy-friendly aggregation technique that allows suppliers to compute their customers' monthly bills without learning their fine-grained electricity consumption data. We also implemented and tested the performance of the trading protocol with realistic data. Our results show that it can be performed for 2500 bids in less than five minutes in the on-line phase, showing its feasibility for a typical electricity trading period of 30 minutes.
\end{abstract}


\section{Introduction}
\label{Introduction}

The Smart Grid (SG) is an electricity grid supporting bidirectional communication between various entities and components in the grid. One of these components is a Smart Meter (SM) that allows real-time grid management~\cite{Farhangi2010}, thus improving the efficiency and reliability of the grid, and the seamless integration of various green energy sources such as Renewable Energy Sources (RESs) (e.g., solar panels) into the distribution grid. When these RESs generate more electricity than needed by their owners, the excess electricity is fed back to the grid. Currently, households get some compensation from their suppliers for such excess electricity at a regulated, low price. However, households with such excess electricity may be interested in selling directly to other consumers at a competitive price for monetary gains. Enabling that would also incentivise households to own RESs. 

To address this, a local electricity market that allows RES owners to trade their excess electricity with other households in their neighbourhood has been proposed in~\cite{Mustafa2016}. However, such a local electricity market has user privacy risks, since users' bids and offers reveal private information about their lifestyle~\cite{hart}. Although there are various proposals for an electricity trading market that allows users to trade with each other, none of them addresses the privacy concerns. The security and privacy concerns in such a local market have been analysed in~\cite{Mustafa2016}. However, no concrete solution has been proposed. In this work, we not only propose a concrete secure and privacy-friendly solution for such a market, but also implement and evaluate its performance using realistic data. This work extends our previous research~\cite{Abidin2016} in improving the bid-matching algorithm and providing a solution for privacy-friendly billing. 

Our specific contributions include: 

\begin{itemize} 

\item a decentralised protocol for local electricity trading that allows the market to identify the selected bids, calculate the clearance price, and compute the total amount of electricity traded by the users belonging to each individual supplier in an oblivious and secure manner; 

\item a concrete time-of-use privacy-friendly billing protocol that allows suppliers to calculate their users' monthly bills without accessing their fine-grained metering data; and

\item an implementation, evaluation and analysis of the proposed protocols using realistic metering data.

\end{itemize}
    
The rest of this paper is organised as follows: Section~\ref{Related Work} presents the related work. Section~\ref{Preliminaries} elaborates on the preliminaries. In Section~\ref{The Protocol}~and~\ref{Security Analysis}, we present our proposed protocols and analyse their security, respectively. Section~\ref{Computational Experimentation} gives the details on our implementation and simulation results. Finally, Section~\ref{Conclusions} concludes the paper.


\section{Related Work}
\label{Related Work}

Preserving users' privacy in SG has already been recognised as an important issue by the research community~\cite{McDaniel2009,Kalogridis:USaPP}. Below we provide a brief overview of state-of-the-art privacy-friendly protocols for electricity trading and billing.

\subsection{Privacy-friendly Protocols for Electricity Trading}
 
Various local electricity trading models have already been proposed~\cite{Lee2014, Tushar2016}. Bayram~et~al.~\cite{Bayram2014} gave an overview of such models, whereas Zhang~et~al.~\cite{ZHANG2017} summarised existing local electricity trading projects. Mengelkamp~et~al.~\cite{Mengelkamp2017:bidding_mechanisms} evaluated several market designs and bidding strategies to demonstrate that all the evaluated market scenarios offer economic advantages for the participating users. Mustafa~et~al.~\cite{Mustafa2016} performed a security analysis of such a local trading market and raised the security and privacy concerns associated with it.

There are already various solutions that partially address these concerns. Mengelkamp~et~al.~\cite{Mengelkamp2017:blockchain_market} designed a local electricity market on a private blockchain. They presented a proof-of-concept model including a simulation of a local blockchain-based energy market that allows users to bilaterally trade energy within their community. Kang~et~al.~\cite{Kang2017} proposed a similar trading mechanism among electric vehicles using a consortium blockchain technology combined with an iterative double auction mechanism designed to maximize social welfare. Aitzhan~and~Svetinovic~\cite{Aitzhan2017} implemented a decentralised electricity trading system using combination of blockchain technology, multi-signatures, and anonymous encrypted messaging as a proof-of-concept. Their system provides identity privacy of participating users and transaction security. Mihaylov~et~al.~\cite{NRGcoin} proposed a virtual currency, called NRGcoin, to convert locally produced energy directly to NRGcoins. In their proposed scheme, each local distribution system operator independently determines for each time slot the rates for energy consumption and production in the neighbourhood, based on the supply-demand balance at that current time slot. 

Rahman~et~al.~\cite{RAHMAN2017} proposed a secure bidding protocol for incentive-based demand response system. However, their protocol is not fully privacy-friendly as the bidding manager, acts like a trusted party and learns all users' bids. Kounelis~et~al.~\cite{Kounelis2017} introduced a platform named Helios that allows users to exchange energy in a decentralised manner using a blockchain technology and smart contracts. Uludag~et~al.~\cite{Uludag2015} proposed a distributed bidding system where only the winning bidder is disclosed to a service provider, whereas the bids of the other bidders are kept private. The same authors extended their solution to facilitate multi-winner auction mechanism~\cite{Balli2017}. Although these solutions provide security and verifiability as they are based on the blockchain technology, they do not fully address users' privacy concerns as transactions could be traced and linked to users.

One way to avoid this privacy leakage is to use Multiparty Computation (MPC)~\cite{Yao82}. Aly and Van~Vyve~~\cite{AV16} proposed an MPC-based auction mechanism that allows suppliers and generators to trade electricity on the day-ahead market in a secure and oblivious manner. Abidin~et~al.~\cite{Abidin2016} proposed an algorithm to match the supply and demand bids of a local electricity market in a privacy-friendly manner. However, they did not provide any solution for billing users.

\subsection{Privacy-friendly Protocols for Billing}

Petrlic~\cite{Petrlic2010} proposed each SM to have a trusted module so that the SM can act as a tamper-resistance device, and calculate its user's bills. Molina-Markham~et~al.~\cite{Molina-Markham2010} proposed to use zero-knowledge proofs to allow SMs to calculate their monthly bills without providing suppliers with metering data, while allowing them to verify the bills. Jawurek~et~al.~\cite{Billing_Petersen_Commitments} proposed each SM to have a plug-in component that generates signed commitments of metering data which are used by the supplier to verify the user's bill. A similar approach was also proposed by Rial~and~Danezis~\cite{Rial2011}. However, instead of having a plug-in component, the authors suggest to have a more powerful user device which could deal with more complex electricity tariffs. Danezis~et~al.~\cite{Danezis2011} improved the protocol in~\cite{Billing_Petersen_Commitments}. In their protocol, each SM adds some random noise to its final bill, so that the supplier can not deduct any useful information about the user's consumption pattern from the bill itself.

Data aggregation is another privacy-preserving technique. The existing work mainly utilises homomorphic cryptographic schemes~\cite{Li2010,Mustafa2015:MUSP,DEP2SA} or MPC~\cite{Mustafa2017} to aggregate metering data used for operational purposes. Bilogrevic~et~al.~\cite{bilogrevic2014} proposed a privacy-friendly aggregation of user profile data. Their protocol allows users to trade personal attributes obliviously. We employ a similar technique that allows the users to report the amount of electricity they traded at the local electricity market in a privacy-preserving manner, in order to  obtain the right amount of compensation from their contracted suppliers.

Unlike the aforementioned work, we propose a decentrilised, efficient and privacy-friendly protocol that allows local electricity trading among users. We extend the work presented in~\cite{Abidin2016} by including (i) a more detailed protocol description, (ii) a formalised security analysis, and (iii) a protocol for efficient and privacy-friendly user billing.


\section{Preliminaries}
\label{Preliminaries}


This section briefly describes the system model, the local electricity market, threat model, the assumptions and requirements introduced in~\cite{Mustafa2016} on which our protocols are based.

\subsection{System Model} 
\label{System Model}

A local electricity market includes the following entities (see Fig.~\ref{Local_electricity_market}). \textit{Renewable Energy Sources (RESs)} are local sources of electricity, e.g., solar panels, located on users' premises. The electricity generated by RESs is consumed by their owners, but surplus electricity is injected into the grid.
\textit{Smart Meters (SMs)} are advanced metering devices which measure the amount of electricity flowing from the grid to the house and vice versa. 
\textit{Users} consume electricity and are billed for this.
\textit{Suppliers} are responsible for supplying electricity to all users who could not get a sufficient amount of electricity from their own RES or on the local market. They are also obliged to buy the electricity which their customers did not trade on the market but still injected into the grid.
\textit{The trading platform} manages the trades. It consists of servers run by parties with competing interests.

We further classify the entities involved in our model as \textit{Dealers}, \textit{Computational parties (Evaluators)} and \textit{Output parties}. \textit{Dealers} are any subset of parties in charge of providing the input data and distributing them among the computational parties. In our protocols, the input data is provided directly by the bidders, i.e., users via their SMs. 
\textit{Evaluators} are the subset of parties performing the computations in a distributed manner. They operate on the data received from the dealers.  
\textit{Output Parties} are any subset of parties that have access to the output data -- the results of the computations performed by the evaluators. At the end of the computation stage, the evaluators send the output data  to the correct output parties. Note that some output data may be made public to all parties, while others only revealed to a specific subset,  e.g., the suppliers.
	
The selection and the number of evaluators depend solely on the application and it could be as many as the number of parties involved in the computation. This approach, however, would be costly in terms of performance. Thus, in our model we choose to have three computational parties. We assume that one party comes from the RES owners (bidders), one from the suppliers and a third one from a local authority, so the need for a trusted trading platform is eliminated, while the security and correctness of the trading operations are still guaranteed.


\begin{figure}[!t]
\centering
\includegraphics[trim= 0 5 0 10,clip=true,width=3.39in]{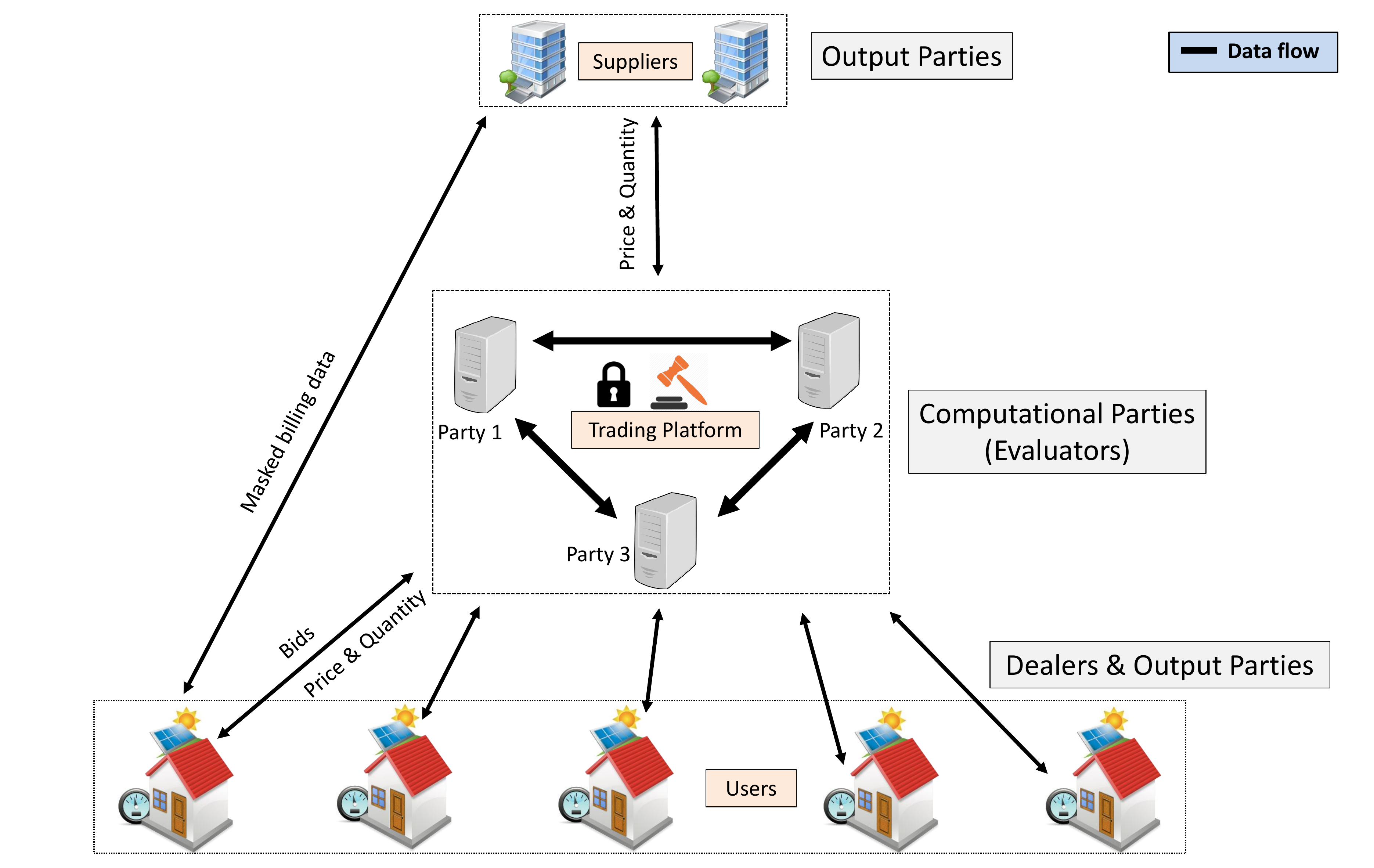}
\caption{A local electricity trading market.}
\label{Local_electricity_market}
\end{figure}



\subsection{Local Electricity Market Operation Overview}
\label{market_overview}

We briefly describe the local electricity market proposed by Mustafa~et~al.~\cite{Mustafa2016}. It consists of the following steps.

\begin{itemize}

\item[1.] \textit{Bid submission:} Prior to each trading period, users submit their bids to the trading platform. Each bid consists of the amount of electricity the user is willing to sell or buy during the trading period and for what price per unit.

\item[2.] \textit{Trading price \& winners selection:} The trading platform performs a double auction to determine the trading price, the amount of electricity traded, and the auction winners.
    
\item[3.] \textit{Informing users and suppliers:} The platform informs (i) the users about the amount of electricity they traded and the trading price, and (ii) the suppliers about the amount of electricity agreed to be traded by their respective users.
    
\item[4.] \textit{Delivering electricity:} During the electricity trading period the auction winners export (or import) the amount of electricity they traded on the local market. 
    
\item[5.] \textit{Calculating rewards and costs:} At the end of the trading period, each SM measures its user's imported and exported electricity for this period, which are used to calculate the users' rewards or costs. 
    
\item[6.] \textit{Settling payments:} Once the suppliers receive the imported and exported electricity values from their customers' SMs, they use them in conjunction with the users' trades for the trading period and the trading price to adjust the customers' bills in order to reflect the effect of the users participating in the local electricity trading market.
    
\end{itemize}


\subsection{Threat Model and Assumptions}
\label{Threat-Model-and-Assumptions}
The trading platform is an honest-but-curious entity. It follows the protocol specifications, but it may attempt to learn individual users' bids.
Users, suppliers and external entities are malicious. Users may try to modify data sent by their (or other users') SMs in an attempt to gain financial advantage, whereas suppliers may try to learn and modify users' bids in an attempt to influence the electricity trading price on the market. They may eavesdrop data in transit trying to learn confidential data and/or modify the data in an attempt to disrupt the SG.


We make the following assumptions: (i) each entity (e.g., SM) has a unique identity, (ii) SMs are tamper-evident, (iii) all entities are time synchronised, (iv) the communication channels between entities are secure and authentic, and (v) users are rational - they try to buy electricity for the cheapest price but sell excess electricity at the highest possible price.


\subsection{Privacy Requirements}


Our protocols should satisfy the following requirements: 
    
\begin{itemize}
    
\item \textit{Confidentiality of bids:} No entity (except the bid originator) should have access to the content of bids. 
    
\item \textit{Auction winner privacy:} The identity of a trading user (auction winner) should not be disclosed to any entity.
            
        
\item \textit{Minimum data disclosure:} Suppliers should access only the aggregate data traded by their consumers.

\item \textit{User privacy:} Suppliers should access only the monthly bill of all their customers (including the trading users).
    
\end{itemize}

\subsection{Multiparty Computation (MPC)} 
\label{MPC}

Secure MPC allows any set of mutually distrustful parties to compute any function such that no party learns more than their original input and what can be inferred from the output. In short, parties $P_{1},...,P_{n}$ want to compute $y= f(x_{1},...,x_{n})$, where $x_{i}$ corresponds to the secret input of party $P_{i}$, in a distributed fashion with guaranteed correctness such that $P_{i}$ learns only $y$ and what can be inferred from $y$. Furthermore, by using arithmetic circuits, any functionality can be constructed on MPC. Notice that any oblivious functionality built in this way would be as secure as the underlying MPC protocols used for its execution. For our protocol design we make use of the following well-known functionality construction: \textit{secure comparison}~\cite{CH10}, \textit{secure sorting}~\cite{HKICT12} and \textit{secure permutation}~\cite{CKKL99}.

\begin{table}[t]
\caption{Notation.}
\label{table Notations}
\centering
\scalebox{1.0}{
\begin{tabular}{l @{\hskip 0.10in}  p{0.8\linewidth}}
\hline
Symbol & Meaning \\
\hline

~~~~$ t_{i} $ & $i$-th time slot \\ 

~~~~$ u_j $ & $j$-th user \\

~~~~$ x_{j,i} $ & $j$-th user's bill for $ t_{i} $ \\

~~~~$ [q]_{j} $ & electricity volume in absolute terms from the $j$-th bid\\ 

~~~~$ [p]_ {j}$ & unit price enclosed in the $j$-th bid \\

~~~~$ [d]_{j} $ & binary value corresponding to the $j$-th bid: $1$ indicates a demand bid and $0$ a supply bid \\

~~~~$ [s]_{j} $ & unique supplier identifier $s \in \{1,..,|S|\}$ where $S$ is the set of all suppliers. Moreover, $s$ is encoded in a $\{0,1\}$ vector, i.e, $[s]_{jk} \leftarrow 1$ on the $k$-th position corresponds to the suppliers unique identifier, and $[s]_{jk} \leftarrow 0$ otherwise, for all $j \in B$. \\ 

~~~~$ [b]_{j}$ & bid's unique identifier from the $j$-th bid \\ 

~~~~$ [\phi] $ & volume of electricity traded on the market for $ t_{i} $ \\

~~~~$ [\sigma] $ & trading price (price of the lowest supply bid) for $ t_{i} $\\

~~~~$ [a]_{i}$ & binary value: $1$ $\leftarrow$ bid $i$ accepted, $0$ $\leftarrow$ bid $i$ rejected \\

~~~~$ [S]^{\phi} $ & set of the volume of electricity traded by supplier affiliation where $[s]^{\phi}_{i}$ stands for the summation of all the accepted bids from users affiliated to the supplier $i$, for all $i \in S$\\

\hline
\end{tabular}
}
\end{table}

\subsection{Notations}

We use square brackets to denote either encrypted or secretly shared values, e.g., $[x]$. Assignments that are a result of any securely implemented operation are represented by the use of the infix operator, e.g., $[z] \leftarrow [x]+[y]$. This extends to any operation over securely distributed data, as its result would be of a secret nature too. Vectors are denoted by capital letters. For a vector, say $B$, $B_{j}$ represents its $j$-th element and $|B|$ its size. 
The bids originated by SMs are considered as input data. Each bid is a tuple $([q],[p],[d],[s],[b])$ and $B$ is the vector of all bids. Table~\ref{table Notations} lists the notations used in the paper.

We assume all bid elements belong to $\mathbb{Z}_{M}$, where M is a sufficiently large prime so that no overflow occurs, and the number of bids or at least an upper bound on them is publicly known. Any other data related to the bid is kept secret. Note that in case the protocol admits a single supply and a single demand bid per SM, the computation of this upper bound is trivial. Markets could opt for enforcing all participants' SMs to submit a bid regardless of whether they participate or not at the $t_{i}$ market clearance. Let $\top$ be a sufficiently big number such that it is greater than any input value from the users, but $\top << M$. In this scenario, non-participating SMs would have to replace their input values by $[0]$ and $[\top]$ accordingly.


\section{Privacy-friendly Protocols for Local Electricity Trading and User Billing}\label{The Protocol}

In this section, we propose two novel protocols for privacy-friendly electricity trading and billing, respectively.

\subsection{Electricity Trading Protocol}
\label{sec:trading-protocol-description}

Our trading protocol covers the first three steps of the local electricity market summarised in~Section~\ref{market_overview}, namely \textit{Bid submission}, \textit{Trading price \& winners selection} and \textit{Informing users and suppliers}. Prior to the description of these steps, we first introduce the time-frame of the trading. Private bids for trading period $t_{i}$ have to be submitted before the beginning of $t_{i-2}$. The auction for $t_{i}$ is computed at period $t_{i-2}$ and the outcome is announced to the users and suppliers before the end of $t_{i-2}$. This is done to allow suppliers to trade in the wholesale market during $t_{i-1}$ so that they can adjust their wholesale deals according to the local market outcome.

\subsubsection{Bid submission (for $t_i$)}

It takes place before the start of $t_{i-2}$ and consists of two tasks: bidding and randomisation.

\begin{itemize}

\item \textbf{\texttt{Bidding:}} Each dealer, i.e., SM, prepares its individual bid and sends (parts of) it to the computational parties, i.e., the trading platform. The preparation takes into account various factors such as available (or required) extra electricity, market conditions, as well as the type of underlying cryptographic primitives used in the protocol. In the case of using a secret sharing scheme as the primitive, each dealer generates shares of its bid using a linear secure secret sharing scheme of choice, e.g.,the scheme in~\cite{Shamir79}. The number of generated shares is equal to the number of the computational parties, three in our case. This is the only input required from the dealers. Then, they send the corresponding shares of their bids to the respective computational parties for evaluation.

\item \textbf{\texttt{Randomisation:}} To randomly permute the dealers' input, upon reception, each computational party multiplies each share (ciphertext) with a column of a randomised permutation matrix. These matrices can be precomputed in an ``offline phase''. This can be achieved by the intervention of a trusted dealer that is not directly involved at any level of the computations~\cite{DPSZ12}. The amount and the purpose of the randomly generated numbers depend on the underlying security model and primitives used by the market. Since the permutation matrix generation used by Bogdanov~et~al.~\cite{BLW08} can be executed ``offline'', we use this technique in our protocol.

\end{itemize}

\subsubsection{Trading price \& winners selection (for $t_i$)}
It takes place at period $t_{i-2}$ and consists of a single task: the market clearance.

\begin{itemize}

\item \textbf{\texttt{Market Clearance:}} The computational parties calculate the trading price and the traded electricity volume, and identify the accepted and rejected bids in a data-oblivious fashion. Algorithm~\ref{algo:auction} gives a detailed overview of our privacy-friendly market clearance. The computational parties calculates the trading price $[\sigma]$, the volume of electricity traded $[\phi]$ and the vector of adjudicated demand and supply bids $[A]$. They do it by obliviously calculating the aggregation of the demand bids $[\delta]$, and then iterating over the set of all bids in $B$ using their volume to match $[\delta]$. To access the vector of accepted supply bids, it is enough to compute $[A]_{j} \times (1-[d]_j) \times [b]_{j}$. To find the vector of accepted demand bids, it is sufficient to calculate $(1-[A]_{j}) \times ([d]_j) \times [b]_{j}$.

\end{itemize}


\begin{algorithm}[t]
\scriptsize
\label{algo:auction}
 
  \SetAlgorithmName{Algorithm}{y}{x} \KwIn{Vector of $n$ bid tuples $B=([q],[p],[d],[s],[b])$}
  \KwOut{Trading price $[\sigma]$, volume of traded electricity  [$\phi$], vector of accepted bids $[A]$ of size $|B|$, vector of aggregated volume traded by supplier $S^{\phi}$ of size $|S|$} 
 
	\For{$j \leftarrow 1$ \KwTo $n$}{
		$[\delta] \leftarrow [\delta] + [q]_{j} \times [d]_{j};$ \\
	}
	$[\nu] \leftarrow [0];$\\
	$[S^{\phi}] \leftarrow \{0_{1},...,0_{|S|}\};$\\
	$[A] \leftarrow \{0_{1},...,0_{|B|}\};$ \\
	\For{$k \leftarrow 1$ \KwTo $n$}{
		$[c] \leftarrow [\nu]<[\delta];$\\		
		$[\sigma] \leftarrow ((1-[d]_{j}) \times [c])\times ([p]_{j} -[\sigma]) +[\sigma];$ \\			
		$[\phi] \leftarrow ((1-[d]_{j}) \times [c]) \times[q]_{j} + [\phi];$\\	
		\For{$k \leftarrow 1$ \KwTo $|S|$}{
			$[s]^{\phi}_{k} \leftarrow ([s]_{jk} \times ((1-[d]_{j})\times[c]) \times [q]_{j} + [s]^{\phi}_{k};$\\
			
		}
		$[a]_{j} \leftarrow [c];$\\

		$[\nu] \leftarrow [\nu] + [c]\times[q]_{j};$\\	
		
	}

 \caption{Market Clearance}
\end{algorithm}


\subsubsection{Informing users and suppliers (for $t_i$)}

It takes place before the start of $t_{i-1}$ and consists of two tasks: informing users and suppliers, respectively.

\begin{itemize}

\item \textbf{\texttt{Informing users:}} To hide the order of the bids, the computational parties shuffle again the vector of all bids $[B]$, together with the associated vector $[A]$. Then, they use the \texttt{open} operation of the underlying MPC primitive on the trading price (for $t_{i}$) $[\sigma]$ and on all the bids $[b]_{j}$, for all $j \in B$. The computational parties then send the shares corresponding to the tuple $B_{b_{j}}$ to the dealer (SM/user) that originated the bid identified by $b_{j}$. The SM then reconstructs the shares and learns if the bid was accepted or rejected, as well as the trading price $\sigma$.

\item \textbf{\texttt{Informing suppliers:}} The computational parties send the shares of the volume aggregation $S^{\phi}_{j}$, for all $j \in S$, to the corresponding suppliers which reconstruct these shares to learn their corresponding aggregate volumes of electricity traded by their customers for $t_i$. The suppliers also learn the market trading price $\sigma$.

\end{itemize}

\textit{Correctness:} The general goal of the protocol is to compute the trading price and to identify the accepted and rejected bids for each trading period. Any supply bid below the trading price, and any demand bid above this price is automatically accepted. The market equilibrium is identified when the price of a given supply allocation surpasses the price of the next cheapest available demand allocation.  

In our protocol (see Algorithm~\ref{algo:auction}), we sort all the bids regardless of whether they are demand or supply bids. We then identify and select bids until the aggregate demand ($[\delta] \leftarrow \sum^{|B|}_{j} [q]_{j}\times [d]_{j}$) is matched (note that to maintain secrecy we iterate over the set of all bids) choosing the bids in ascending order of price. If a supply bid is selected, this implies that there is no supply bid that could be allocated to reduce $[\delta]$, and hence is not part of the market clearance. Using $[d]_{j}$ cancels the supply bid's effect over $[\delta]$, and  provides us with sufficient tools to identify it. The opposite occurs when a demand bid is selected. The bids that were used to reduce $[\delta]$ can be identified, and correspond to all the supply and demand bids with prices below and above the trading price, respectively. From this, the set of accepted bids follows. The trading price is set to the price of the last selected supply bid. 

\textit{Complexity:} The complexity of the protocol grows linearly with the number of bids. The complexity of Algorithm~\ref{algo:auction} is $\mathcal{O}(|B| \times |S|)$. Note that the number of suppliers rarely varies over time, and is of a limited size. As an additional note, secure vector permutation can be achieved in $\mathcal{O}(n \times log(n))$, where $n$ is the size of the vector. Moreover, the sorting method used by our protocol can achieve $\mathcal{O}(n \times log(n))$.


\subsection{Privacy-Friendly Billing Protocol}
\label{Privacy-Preserving Billing}

As mentioned before, the suppliers use the imported and exported electricity values from their customers' SMs, in conjunction with the users' trades for the trading period and the trading price, to adjust the customers' bills. However, providing such information to the suppliers poses privacy threats. Therefore, we propose the following private reporting mechanism that both facilitates the calculation of the correct bill and preserves the users' privacy. 

Let $L\in\mathbb{N}$ be the total number of trading a user has done at the local market during a billing period (i.e., there are $L$ trading periods at the local market during one billing period). Then each user $u_j$ has a vector $X_j=(x_{j,1},\cdots,x_{j,L})$ of $L$ values which correspond to the user's bill at each period $t_i,\,i=1,\cdots,L$. These values can be calculated by the SM locally, using the total amount of electricity measured at each period and the amount of electricity traded at that period. The goal is to report $x_{j,i}$s in such a manner that it preserves the privacy of the values but still allows the supplier to calculate the user's monthly bill, $\texttt{Bill}_j=\sum_{i=1}^Lx_{j,i}$. 

Let $M$ and $\mathbb{Z}_M$ be as before. Then, in each billing period, each user $u_j$ randomly selects $L$ elements  $s_{j,i}\in\mathbb{Z}_M$, for $i=1,\cdots,L$ such that $\sum_{i=1}^Ms_{j,i}\equiv 0 \mod M$.  At the end of each trading period, $u_j$ masks $x_{j,i}$ as $c_{j,i}\equiv x_{j,i}+s_{j,i}\mod M$. 
The user then sends $c_{j,i}$ to the supplier. Upon receiving all $c_{j,i}, \,i=1,\cdots,L$, from $u_j$, the supplier computes the monthly bill for the user $u_j$ as $$\sum_{i=1}^Mc_{j,i}\equiv \sum_{i=1}^M(x_{j,i}+s_{j,i})\equiv \sum_{i=1}^Mx_{j,i}\mod M.$$  

The random elements $s_{j,i}$ elements function as one-time masks so that $c_{j,i}$s do not reveal information about $x_{j,i}$s, for $i=1,\cdots,L$.


\section{Security Analysis}
\label{Security Analysis}
We analyse the security of our trading protocol (cf. Section~\ref{sec:trading-protocol-description}) in the Universal Composability framework~\cite{Canetti2000}. In this framework, the ideal functionality of MPC is modeled as Arithmetic Black Box (ABB)~\cite{DN03}. 
\begin{definition}[ABB Functionality $\mathcal{F}_{\textit{ABB}}$]
\label{def:FABB}
The ideal functionality $\mathcal{F}_\textit{ABB}$ for MPC is defined as follows:
\begin{itemize}
\item $\mathtt{Input}$: Receive a value $\alpha \in \mathbb{Z}_M$ and store $\alpha$. 
\item $\mathtt{Share}(\alpha)$: Create a share $[\alpha]$ of $\alpha$.
\item $\mathtt{Product}([\alpha], [\beta])$: Compute $\gamma = \alpha \times \beta$ and store $[\gamma]$.
\item $\mathtt{Compare}([\alpha],[\beta])$: Compare $\alpha$ and $\beta$, and return $0$ if $\alpha<\beta$ and 1 otherwise.
\item $\mathtt{Equal}([\alpha],[\beta])$: Check if $\alpha=\beta$; return $1$ if $\alpha=\beta$, 0 otherwise.
\item $\mathtt{sRand}()$: Sample $r \xleftarrow{R}\mathbb{Z}_M$ and store $[r]$. 
\item $\mathtt{Permute([X])}$: Given an input $[X]\in\mathbb{Z}_M^n$ return a random permutation of it. 
\item $\mathtt{Open}([\alpha])$: Send the value $\alpha$ to all players.
\end{itemize}
Addition and scalar multiplication are denoted by  their corresponding conventional symbols $+$ and $\times$.
\end{definition}
ABB can be thought of as a generic procedure for secure distributed computation, and provides us with abstraction of the details of MPC operations and of secret sharing. Any number of players can send their private input to ABB to compute any computable function on their private inputs. The computation results are stored in the internal state of ABB to be used in the subsequent computations. Stored values are only made public if enough number of players agree to it. 

\begin{definition}[UC-security \cite{C00}]
\label{def:UC-security}
 A real protocol $\pi$ is UC-secure if, for any adversary $\mathcal{A}$, there exists a simulator $\mathcal{S}$ for which no environment $\mathcal{Z}$ can distinguish with a non-negligible probability if it is interacting with $\mathcal{A}$ and $\pi$ or $\textsf{S}$ and the ideal functionality $\mathcal{F}$.
\end{definition}

\begin{definition}[Universal Composition \cite{C00}]
\label{UC-theorem}
Let $\pi$ and $\rho$ be two protocols such that $\rho$
$\epsilon_1$-UC-emulates $\mathcal{G}$ and $\pi\circ\mathcal{G}$ $\epsilon_2$-UC-emulates
$\mathcal{F}$, when using $\mathcal{G}$ as a subroutine. Then
$\pi\circ\rho$ $(\epsilon_1+\epsilon_2)$-UC-emulates $\mathcal{F}$, when using
$\rho$ as a subroutine.
\end{definition}

\begin{definition}[Ideal Functionality of the Local Electricity Market $\mathcal{F}_\textit{LEM}$]
Given a set of private input bids, the ideal functionality $\mathcal{F}_{\textit{LEM}}$ computes the trading price, the winners, and the aggregate data of winners per supplier, without learning anything but the trading price. 
\end{definition}

\begin{theorem}
Let $\pi$ be the local electricity trading protocol presented in Section \ref{sec:trading-protocol-description}. Then $\pi$ securely emulates $\mathcal{F}_{\textit{LEM}}$.
\end{theorem}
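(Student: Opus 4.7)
The plan is to prove the theorem in the $\mathcal{F}_{\textit{ABB}}$-hybrid model and then invoke the UC composition theorem (Definition~\ref{UC-theorem}) to lift the result to the real protocol, relying on the UC-security of the underlying secret-sharing, secure comparison, secure multiplication, and secure permutation sub-protocols cited in Section~\ref{MPC}. The first step would be to observe that the protocol $\pi$ of Section~\ref{sec:trading-protocol-description}, together with Algorithm~\ref{algo:auction}, can be written as a straight-line sequence of ABB calls ($\mathtt{Input}$, $\mathtt{Share}$, additions, $\mathtt{Product}$, $\mathtt{Compare}$, $\mathtt{Permute}$) followed by a small, fixed set of $\mathtt{Open}$ events at the very end: the public trading price $\sigma$, the shuffled bid identifiers delivered to each dealer together with the attached accept/reject flag, and the per-supplier aggregate $[s]^{\phi}_k$ delivered to the corresponding supplier.

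Next, I would construct a simulator $\mathcal{S}$ that interacts with $\mathcal{F}_{\textit{LEM}}$ and emulates $\mathcal{F}_{\textit{ABB}}$ internally for the corrupted parties. For each corrupted dealer, $\mathcal{S}$ reads the input tuple submitted to its $\mathtt{Input}$ call and forwards the same bid to $\mathcal{F}_{\textit{LEM}}$; for each honest dealer, $\mathcal{S}$ feeds dummy values (e.g., $[0]$ for the quantity and $[\top]$ for the price, as suggested by the paper) into its simulated ABB. All intermediate computations of Algorithm~\ref{algo:auction} are then carried out on the simulated handles, which by definition of $\mathcal{F}_{\textit{ABB}}$ leak nothing to the environment. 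The non-trivial simulation work lies in programming the three types of $\mathtt{Open}$ events from the outputs that $\mathcal{S}$ extracts from $\mathcal{F}_{\textit{LEM}}$: $\sigma$ is forwarded verbatim; for each corrupted dealer, $\mathcal{S}$ obtains the dealer's own accept/reject flags and programs the corresponding positions in the shuffled identifier list while filling the honest positions with a uniformly random permutation of the remaining identifiers; and for each corrupted supplier, $\mathcal{S}$ delivers the aggregate returned by $\mathcal{F}_{\textit{LEM}}$.

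Indistinguishability would then follow by a standard hybrid argument. ABB handles are perfectly (or statistically, depending on the sharing scheme) hiding, so no prefix of the ABB transcript distinguishes the two worlds. The opened values coincide in distribution with those of the real protocol because Algorithm~\ref{algo:auction} is deterministic in its inputs — whose joint effect is exactly the functionality of $\mathcal{F}_{\textit{LEM}}$ — while the $\mathtt{Permute}$ call randomises the identifier vector uniformly and independently of the selection decisions, so the post-shuffle view can be reproduced from a freshly sampled permutation. Applying Definition~\ref{UC-theorem} with the UC-security of the sub-protocols then yields the claim.

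The main obstacle, I expect, is pinning down precisely what $\mathcal{F}_{\textit{LEM}}$ is permitted to leak so that the reveal of the shuffled identifier vector does not give a corrupted coalition more information than its ideal-world output. Concretely, a coalition consisting of some dealers and some suppliers will learn, in the real execution, its own accept/reject outcomes, its own supplier-level aggregates, and the trading price; the proof must show that this joint distribution is simulatable from $\mathcal{F}_{\textit{LEM}}$'s outputs together with a fresh random permutation over the honest positions, and that no additional correlation (for example between aggregate volumes and individual identifiers) is induced by the post-shuffle $\mathtt{Open}$. Once this leakage profile is incorporated into the specification of $\mathcal{F}_{\textit{LEM}}$, the simulation goes through and the composition theorem delivers the UC-security of $\pi$.
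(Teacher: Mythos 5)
Your proposal takes the same route as the paper --- decompose $\pi$ into $\mathcal{F}_{\textit{ABB}}$ calls and invoke the UC composition theorem --- but the paper's own proof is only a two-sentence appeal to that composability, whereas you actually construct the simulator and argue indistinguishability of the opened values. Your closing observation is well taken: the paper's $\mathcal{F}_{\textit{LEM}}$ is stated as leaking ``nothing but the trading price,'' yet the protocol also opens shuffled bid identifiers with accept/reject flags and per-supplier aggregates, so the leakage profile you insist on pinning down is genuinely missing from the paper's formalization and your treatment is the more careful one.
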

\begin{proof} 
The proof is a straightforward application of the Universal Composition Theorem~\ref{UC-theorem}, since (i) each ABB operation UC-emulates its corresponding ideal functionality and (ii) the local electricity trading protocol $\pi$ consists only of the ABB operations that are securely composable.  
\end{proof}




\section{Experimentation Results}
\label{Computational Experimentation}

In this section we provide the details of the implementation of our local electricity trading protocol and the results of its evaluation under realistic scenario configurations. Next we elaborate on the data generation process, the experimentation settings and the results.

\subsection{Data Generation} 
\label{Data Generation}

For our experiments, we used realistic data from Belgium. First we picked a time slot and date, i.e., 13:00h-13:30h on 5-th of May 2016, during which 2382~MW solar electricity was generated in Belgium by solar panels with total capacity 2953~MW~\cite{Elia_PV_electricity} -- on average each solar panel has produced electricity approximately equal to 81.66\% of its capacity. The average electricity consumption data of a Belgian household during the same time slot was 0.637~kW~\cite{Av_Cons_Data_BE}. Thus, for each user we generated a random consumption data for this time slot with mean equal to the average consumption data, i.e., 0.637, standard deviation equal to 0.20 and variance equal to 0.04. Then, we randomly chose 30\% of the users to have installed a solar panel at their homes, and each of the solar panels is randomly assigned one of the following electricity generation capacities: 2.3, 3.6 or 4,7 kW. After that, we randomly generated the electricity output of each solar panel during this time slot with a mean equal to the capacity of the given solar panel multiplied with the efficiency factor for the time slot, i.e., 81.66\%, standard deviation equal to 0.20 and variance equal to 0.04. Once we generated the electricity consumption and generation data for each user with a solar panel, we simply subtracted the latter from the first value to find the amount of excess electricity each user has. 

We assumed that there are 10 different suppliers available in the market and randomly assigned a supplier to each user. The retail electricity sell price of the suppliers is set to $0.20$~\euro/kWh and the retail buy price is set to $0.04$~\euro/kWh. For the bid price selection we used the following rational steps. We divided the retail electricity sell and buy price difference into nine ranges each including several (overlapping) prices, e.g., range 2 includes three prices: $0.04$, $0.05$ and $0.06$~\euro/kWh, whereas range 7 includes four prices: $0.17$, $0.18$, $0.19$ and $0.20$~\euro/kWh. Then, for each user, depending on how much excess electricity he/she has for sell or he/she wants to buy, we picked randomly one of the prices from the appropriate price range. For selecting the appropriate price range we assumed that the users are rational, i.e., if they have a lot of excess electricity to sell, they would choose a lower asking price, so they could sell it all. In contrast, if they have a little excess electricity to sell, they would ask for a higher price since selling it for a cheap price will not allow them to make a high profit by selling it at the market compared to selling it directly to the supplier.  

In summary, for each user we generated the following data items: unique user ID, amount of electricity for the bid, bid price, indicator if the bid is a supply or demand bid, and ID of the user's contracted supplier.


\subsection{Characteristics, Settings and Environment}

We executed our experimentation using the MPC Toolkit proposed in~\cite{Aly15} which is based on BGW~\cite{BGW88}. This library includes all the underlying crypto primitives and other sub-protocols we report on, together with our own introduced code. The library was compiled with NTL (Number Theory Library)~\cite{Shoup01} that itself was compiled using GMP (GNU Multiple Precision Library). These two libraries are used for the modulo arithmetic that is extensively used by the underlying MPC protocols. Each instance of the prototype comprises two CPU threads. One manages message exchanges exclusively and the other executes the protocol and the related cryptographic tasks. The application itself is not memory demanding, with each instance requiring approximately $1$ MB of allocated memory at any time, during our most memory demanding test which included 2500 bids.

Our prototype was built in C++ following an object oriented approach, with modularity and composability in mind. It has an engine that separates communication and cryptographic tasks. Table~\ref{tab:primitives} shows the detailed list of the primitives used in our implementation. All our tests were performed under a 3-party setting, with two available cores for each instance. We ran our tests starting with a baseline of a realistic scenario with 100 bids and then monotonically increasing the number of bids until they reached 2500 bids. Each test scenario was repeated $10$ times to reduce the impact of the noise. The values reported in the rest of this section correspond to those of the resulting averages. Finally, in all of our tests, bids have been randomly distributed between $10$ different suppliers. We executed our tests on a single 64-bit Linux server with 2*2*10-cores with Intel Xeon E5-2687W microprocessors at 3.1GHz and 25 MB of cache, and with memory of 256 GB.


\begin{table}[t]
  \centering  \caption{List of primitives used in our bidding protocol prototype}
  \scalebox{1.0}{
  \begin{tabular}{l l}
  \hline
   \textbf{Primitive}       &   \textbf{Protocol} \\
  \hline
	Sharing                 &   Shamir Secret Sharing~\cite{Shamir79} \\

    Multiplication          &   Gennaro et al.~\cite{GRR90}   \\

    Inequality Test         &   Catrina and Hoogh~\cite{CH10}   \\ 

    Random Bit Generation~  &   Damg{\aa}rd et al.~\cite{DKNT06}   \\ 

    Sorting: QuickSort ~  &   Hamada et al.~\cite{HKICT12}  \\ 

    Permutation: Sorting Network ~  &   Lai et al.~\cite{LWZ11}  \\ 
 \hline
  \end{tabular}}
   \vspace{-.2cm}
\label{tab:primitives}
\end{table}


\subsection{Results} 
\label{Numerical Results}

Our trading protocol prototype requires bit randomization for the comparison methods. The task of generating such values could be executed beforehand, in an ``offline'' phase. The ``online'' phase would execute the remaining tasks and would utilise the randomization values generated during the ``offline'' phase. Figure~\ref{test_result_2} shows such breakdown for all the test scenarios considered. We measured the computational cost at every test instance. For the 2500 bids case, our prototype took 678.50 seconds for either sending or waiting to receive other parties' messages (note that our prototype is synchronous) and 215.52 seconds for the remaining tasks, e.g., crypto primitives. In other words, approximately $ 75\%$ of the computational time was dedicated to transmission related tasks. Moreover, the asymptotic behaviour on the growth of the computational time seems to follow the behaviour included in the theoretical complexity analysis presented in Section~\ref{sec:trading-protocol-description}

Table~\ref{tab:execution} shows a more complete breakdown of our results. In short, the 2500 bids instance total time on the online phase is less than 5 minutes, while when the offline phase is considered, it is less than 15 minutes. In both cases this is less than the typical 30 minutes duration of a trading period $t$. Note that during our tests, approximately $95\%$ of the computational time was spent on sorting the bids. Although the number of suppliers rarely changes with time and it is relatively low, in our tests we used a fixed value of $10$. Given that they are involved in the other stages of the protocol, their influence on the number of suppliers is quite limited. This means that our algorithm can be easily adjusted to other realistic scenarios without much overhead, for bigger suppliers settings.

As for the simple billing protocols, it is computationally cheap and does not incur any communication overhead. The only requirement is that in each billing period, each user (or SM) needs to choose $L$ random elements from $\mathbb{Z}_M$ such that they add up to $0\mod M$.


 \begin{table}[t]
  \centering
  \caption{Overall Results}
  \begin{tabular}{ l c c c c c c|}
    \hline
    \textbf{Bids}   &   \textbf{Number of}~          & \textbf{Number of}  & \textbf{CPU time}    & \textbf{Online phase} \\
    ~   &   \textbf{com. rounds}~          & \textbf{comparisons}  & \textbf{(in sec)}    & \textbf{(in sec)} \\
    \hline
	    100             &   ~~$\approx 1.40\cdot 10^5$      & $965$                     & ~~$2.96$                      & ~~$1.01$ \\ 
        500             &   ~~$\approx 1.96 \cdot 10^6$     & $14628$                   & ~$40.40$                      & ~$11.35$ \\ 
	    1000            &   ~~$\approx 7.03 \cdot 10^6$     & $53508$                   & $147.76$                      & ~$39.80$ \\ 
        1500            &   $\approx 15.61 \cdot 10^6$      & $118956$                  & $320.79$                      & ~$86.14$ \\
        2000            &   $\approx 26.97 \cdot 10^6$      & $208132$                  & $562.50$                      & $145.78$ \\
        2500            &   $\approx 43.15 \cdot 10^6$      & $330912$                  & $894.01$                      & $235.82$ \\
 \hline
  \end{tabular}

\label{tab:execution}
\end{table}


\begin{figure}[t]
\centering
\includegraphics[trim= 0 0 0 0,clip=true,width=3.39in]{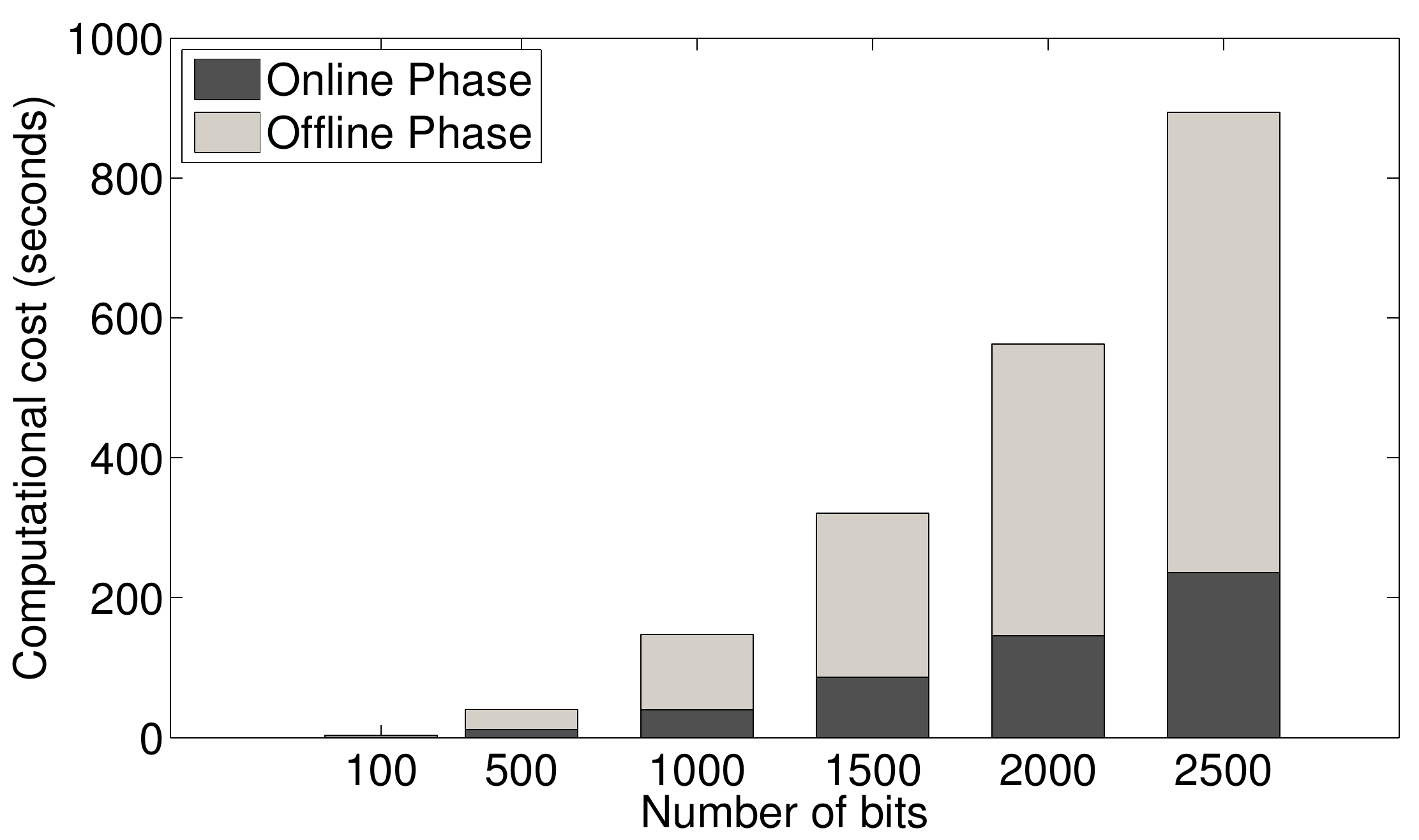}
\caption{Numerical results for our trading protocol.}
\label{test_result_2}
\end{figure}




\section{Conclusions}
\label{Conclusions}

In this paper we proposed two privacy-friendly protocols for local electricity trading and billing, respectively, that allow users to trade their excess electricity among themselves as well as calculate their monthly bill. Our first protocol employs a bidding scheme based upon MPC, and the selection of the bids and the calculation of the electricity trading price are performed in a decentralised and oblivious manner. Our second protocol uses private data aggregation technique to calculate the users' monthly bills locally and allows the suppliers to compute only the final monthly bill per customer, but not the individual metering data per time slot. We also implemented the trading protocol in C++ and tested its performance with realistic data. Our simulation results indicate its feasibility for a typical electricity trading period of, for example, 30 minutes, as the bidding protocol is performed for 2500 bids in less than five minutes in the online phase. Both of our protocols are efficient and practical enough to be deployed in real world.

Future work will include possible optimisation of the underlying MPC implementation of the trading protocol.

\bibliographystyle{IEEEtran}
\bibliography{IEEEabrv,TSGbibliography}{}


\end{document}